\newcommand{\ignore}[1]{}
\newtheorem{theorem}{Theorem}
\newtheorem{definition}[theorem]{Definition}
\newtheorem{lemma}[theorem]{Lemma}
\newtheorem{claim}[theorem]{Claim}
\newtheorem*{claim*}{Claim}
\renewenvironment{proof}{

\noindent{\bf Proof:}} {\hfill$\blacksquare$
}
\newcommand{\initOneLiners}{%
    \setlength{\itemsep}{0pt}
    \setlength{\parsep }{0pt}
    \setlength{\topsep }{0pt}
}
\newenvironment{proofof}[1]{

\bigskip\noindent{\bf Proof of {#1}:}}
{\hfill$\blacksquare$
}
\newcommand{\sse}{\subseteq}
\def\set#1{\left\{#1\right\}}
\def\gain{\mathrm{gain}}
\def\state#1{\left<#1\right>}
\DeclareMathOperator*{\E}{\mathbb{E}}
\def\Probability#1{\Pr\left[#1\right]}
\def\Expectation#1{\E\left[#1\right]}
\def\sep{\;|\;}
\def\stochTSP{\textsc{Stochastic $k$-TSP}\xspace}
\def\adTSP{\textsc{Ad-kTSP}\xspace}
\def\naTSP{\textsc{NonAd-kTSP}\xspace}
\def\cD{\ensuremath{{\cal D}}\xspace}
\def\ALG{\textsc{ALG}\xspace}
\def\OPT{\textsc{OPT}\xspace}
\def\script#1{\mathcal{#1}}
\def\sT{\script{T}}
\title{Approximation Algorithms for Stochastic $k$-TSP}
\date{\today}
\author{Alina Ene\thanks{Computer Science Department, Boston University. aene@bu.edu} \and Viswanath Nagarajan\thanks{Industrial and Operations Engineering Department, University of Michigan. viswa@umich.edu} \and Rishi Saket\thanks{IBM Research India. rissaket@in.ibm.com}}
\begin{document}
\maketitle

\begin{abstract}
We consider the stochastic $k$-TSP problem where rewards at vertices are random and the objective is to minimize the expected length of a tour that collects reward $k$. We present an adaptive $O(\log k)$-approximation algorithm, and a non-adaptive $O(\log^2k)$-approximation algorithm.  We also show that the adaptivity gap of this problem is between $e$ and $O(\log^2k)$. 
\end{abstract}

\section{Introduction}
We consider the {\em stochastic $k$-TSP} problem. There is a
metric $(V,d)$ with depot $r\in V$. Each vertex $v\in V$ has an independent 
stochastic reward $R_v \in \mathbb{Z}_+$. All reward distributions are known upfront, but the actual reward instantiation $R_v$ is only known when vertex $v$ is visited.  Given a target value $k$, the goal is to
find an adaptive tour originating from $r$ that collects a total reward at least $k$ with the
minimum expected length.\footnote{If the total instantiated  reward $\sum_{v\in V} R_v$ happens to be less than $k$,  the tour ends after visiting all vertices. }  We assume that all rewards are  supported on $\{0,1,\ldots,k\}$. 

Any feasible solution to this problem can be described by  a decision tree where nodes correspond to vertices that are visited and branches correspond  to observed reward instantiations. The size of such  decision trees can be exponentially large. So we focus on obtaining solutions (aka policies) that implicitly specify decision  trees. These solutions are called {\em adaptive} because the choice of the next vertex to visit depends on past random instantiations.    

We will also consider the special class of {\em non-adaptive} solutions. Such a solution is described simply by an ordered list of vertices: the policy involves visiting vertices in the given order until the target of $k$ is met (at which point the tour returns to $r$). These solutions are often preferred over adaptive solutions as they are easier to implement. However, the performance of non-adaptive solutions may be much worse, and so it is important to bound the {\em adaptivity gap}~\cite{DGV08} which is the worst-case ratio  between optimal adaptive and non-adaptive policies. This approach via non-adaptive policies has been very useful for a number of stochastic optimization problems, eg.~\cite{DGV08,GV06,GuhaM09,BGLMNR10,GKMR11,GKNR15,BN15,GN13}.

When the metric $(V,d)$  is a weighted-star rooted at $r$, we obtain the 
{\em stochastic knapsack cover} problem: given $n$ items with each item $i\in[n]$ having deterministic cost $d_i$ and independent random reward $R_i\in \mathbb{Z}_+$, and a target $k$, find an adaptive policy that obtains total reward $k$ at the minimum expected cost. An adaptive $2$-approximation algorithm for this problem was obtained in~\cite{DHK16}, but no bounds on the adaptivity gap were known previously.

Our main result is the following:
\begin{theorem}
There is an adaptive algorithm for stochastic $k$-TSP with approximation ratio $O(\log k)$. Moreover, the adaptivity gap is upper bounded by $O(\log^2 k)$. 
\end{theorem}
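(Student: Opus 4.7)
My plan is to reduce \stochTSP to a sequence of deterministic orienteering instances, invoking a constant-factor approximation for orienteering (Chekuri--Korula--P\'al) as a black box. The natural vertex weights are the truncated expected rewards $\mu_v^{(t)} := \E[\min(R_v, t)]$, where $t$ is the residual target; reward in excess of $t$ is wasted toward the $k$-cover objective. By standard geometric guessing I may assume the value $C^* := \E[\mathrm{cost}(\OPT)]$ is known up to a constant factor.

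At the heart of the argument is a structural lemma: for every $t \le k$, there exists a deterministic tour of length $O(C^*)$ whose vertex set $S$ satisfies $\sum_{v \in S} \mu_v^{(t)} \ge \Omega(t/\log k)$. I would prove this by restricting \OPT's decision tree to cost at most $2C^*$ (losing only a constant factor of progress toward the target, via Markov), partitioning the truncated tree's cost axis into $O(\log k)$ geometric bands of widths $2^{j} C^*/\log k$, and using averaging to pick a band in which the expected truncated reward gathered is $\Omega(t/\log k)$. The sub-additivity inequality $\min\bigl(\sum_v R_v,\,t\bigr) \le \sum_v \min(R_v,t)$ lets me re-express this quantity in terms of $\mu_v^{(t)}$-weights, and the existence of a single deterministic vertex set of tour-length $O(C^*)$ achieving the required sum follows by averaging over the random reward instantiation.

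With the lemma, both parts of the theorem follow from the same template: in each phase, solve the orienteering instance with budget $\Theta(C^*)$ and weights $\mu_v^{(t)}$ for the current residual $t$, execute the returned tour, observe the rewards, update $t$, and repeat. For the adaptive $O(\log k)$ bound, I plan an amortized analysis that charges the $O(C^*)$ cost of each phase against the (random) reward $\Delta_i$ it collects, so that $\sum_i C_i$ telescopes like a harmonic sum $\sum_i \Delta_i/t_i$ to $O(C^* \log k)$. For the adaptivity gap, essentially the same outline must be executed by a policy that commits to its sequence of tours in advance (it is only allowed to stop once the target is met); a Wald-style bound on the stopping time, together with a careful choice of ``residual-target guesses'' plugged into the orienteering subproblems, yields the $O(\log^2 k)$ factor.

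The main technical obstacle will be the amortized step for the adaptive bound: the orienteering subroutine only controls $\E[\Delta_i]$ conditioned on history, not a per-sample-path reward-to-cost ratio, so carrying out the telescoping requires a careful potential-function argument on, say, $\log t_i$, leveraging sub-additivity of $\min$ and a concentration inequality to guarantee that $t_i$ shrinks by a constant factor in expectation per phase. This is exactly where adaptivity pays off over the non-adaptive baseline: only an adaptive policy can retune its orienteering weights to the true residual target $t_i$ after each observation, saving one of the two $\log k$ factors.
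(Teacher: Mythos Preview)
Your high-level template (truncated expected rewards as orienteering profits, geometric doubling, harmonic-sum telescoping) matches the paper's, but two concrete gaps prevent it from reaching $O(\log k)$.

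\textbf{The structural lemma is both too weak and insufficiently conditional.} You claim that for every residual $t$ there is a tour of length $O(C^*)$ whose $\mu^{(t)}$-weight is $\Omega(t/\log k)$, and you plan to apply it at each phase with $t=t_i$. But after the first phase the algorithm has already removed a set $S_0$ of vertices from play, and nothing in your statement guarantees a good tour inside $V\setminus S_0$. The paper's counterpart (its Lemma~3) is explicitly conditional on the observed outcomes $\sigma$: it produces a tour on the \emph{remaining} vertices with profit at least $(k-k(\sigma))\cdot p_i^*(\sigma)$, where $p_i^*(\sigma)$ is the probability, given $\sigma$, that \OPT\ finishes within distance $2^i$. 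This conditional factor $p_i^*(\sigma)$ is the whole coupling mechanism to \OPT: taking expectations over $\sigma$ gives $\E[p_i^*(\sigma)]=1-u_i^*$, which is exactly what lets one compare the algorithm's phase-$i$ continuation probability $u_i$ to \OPT's $u_i^*$. Your unconditional lemma has no such handle, so your amortized step (``$t_i$ shrinks by a constant factor in expectation per phase'') cannot be carried out --- conditioned on reaching a late phase, the residual instance may have become much harder, and your lemma says nothing about it. Separately, the $1/\log k$ loss in your lemma is unnecessary (the averaging over root-to-leaf paths in the paper's proof extracts the full $\Omega(t\cdot p_i^*(\sigma))$ with no banding), and it is fatal: plugging $\E[\Delta_i/t_i]\ge\Omega(1/\log k)$ into the harmonic upper bound $\sum_i\Delta_i/t_i\le H_k$ bounds the expected number of phases only by $O(\log^2 k)$, not $O(\log k)$.

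\textbf{``Assume $C^*$ known by geometric guessing'' hides the main difficulty.} Even granting the correct conditional lemma, a single fixed budget $B\approx 2C^*$ only yields that after $\Theta(\log k)$ orienteering calls the algorithm has finished with probability $\ge 1-u_B^*-\tfrac14$; with the remaining probability the conditional optimum may now be far larger than $C^*$, and further iterations at budget $B$ give no guarantee. The paper handles this by genuinely interleaving the doubling with the analysis: for each budget $2^i$ it runs $\Theta(\log k)$ orienteering calls and proves the recursion $u_i\le \tfrac14 u_{i-1}+u_i^*$, which then telescopes against $\OPT=\Theta(\sum_i 2^i u_i^*)$. This recursion is where the comparison to \OPT\ actually happens; treating the doubling as an external ``guess $C^*$'' wrapper loses it. (Relatedly, the paper's Example~2 shows that running $o(\log k/\log\log k)$ orienteering calls per budget level fails badly, so you cannot save the $\log k$ inner loop either.)
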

The adaptivity gap is constructive: we give a non-adaptive algorithm that is an $O(\log^2 k)$-approximation to the optimal adaptive policy. We also show that the adaptivity gap is at least $e\approx 2.718$, which holds even in the special case of stochastic knapsack cover with a single random item. 

For the deterministic $k$-TSP problem, there is a $2$-approximation algorithm~\cite{G05}. So obtaining an $O(1)$-approximation algorithm for stochastic $k$-TSP is an interesting open question. 
\section{Stochastic $k$-TSP Algorithm}
Our algorithm for \stochTSP relies on iteratively solving instances of the (deterministic) orienteering problem. In the orienteering problem, we are given a metric $(V,d)$, depot $r$, profits at vertices and a length bound $B$; the goal is to find a tour  originating from $r$ of length at most $B$ that maximizes the total profit. There is a $2+\epsilon$-approximation algorithm for orienteering~\cite{CKP12} which we can use directly. 
The vertex-profits in the orienteering instance  are chosen to be  truncated expectations of the random rewards $R_v$ where  the truncation threshold is  the current residual target. The length bounds are geometrically increasing over different iterations.   However we need to solve $\log k$ many orienteering instances  with the same length bound, which (roughly speaking) results in the $O(\log k)$ approximation ratio. This turns out to be a somewhat subtle issue because reducing the number of repetitions results in a much worse approximation ratio (see Example~2 below).

The approximation algorithm is given below. We assume (by scaling) that the minimum positive distance in the metric $(V,d)$ is one. At any point in the algorithm, $S$ denotes the set of currently visited vertices, $\sigma$ denotes the reward instantiations of $S$, and $k(\sigma)$ is the total observed reward.  
The number of iterations of the inner for-loop is $\alpha:=c\cdot H_k$, where $c\ge 1$ is a constant to be fixed later and $H_k\approx \ln k$ is the $k$-th	harmonic number.  We will show that \adTSP is an $8\alpha$-approximation algorithm for \stochTSP. Some of the ideas in this analysis are similar to those used previously in deterministic routing problems~\cite{CGRT03,FHR07} and  stochastic covering problems~\cite{INZ12}.

\begin{algorithm}[h!] \caption{Algorithm \adTSP\label{alg:kTSP}} 
  \begin{algorithmic}[1]
    \STATE initialize $\sigma\gets \emptyset$,	$S\gets \emptyset$ and $k(\sigma)=0$.
	\FOR{phase $i=0,1,\ldots$}  
	\FOR{$t=1,\ldots,\alpha$}
		\STATE define profits at vertices as follows
		$$w_v := \left\{ 
		\begin{array}{ll}
		\E[\min\{ R_v,k-k(\sigma)\}] & \forall  \, v\in V\setminus S \\
		0 & \forall \, v\in S
\end{array}\right.$$
		\STATE \label{step:orient} using a $\rho$-approximation algorithm for the {\em orienteering} problem, compute a tour $\pi$ originating from $r$ of length at most $2^i$  with maximum total profit. 
		\STATE traverse tour $\pi$ and observe the actual rewards on
		it; augment $S$ and $\sigma$ accordingly.
		\STATE if $k(\sigma)\ge k$ or all vertices have been visited, the solution ends. 
    \ENDFOR
    \ENDFOR
  \end{algorithmic}
\end{algorithm}



We view each iteration of the outer for loop as a \emph{phase} and  use $i$ to index the phases. For any phase $i \geq 0$, we define the following quantities:
\begin{eqnarray*}
	u_i &:=& \Probability{\mbox{\adTSP continues beyond phase }	i}\\
	u^*_i &:=& \Probability{\mbox{optimal policy continues beyond distance } 2^i}
\end{eqnarray*}

Since the minimum distance in the metric is one we have $u^*_0=1$. 
The following lemma is the main component of the analysis.

\begin{lemma}\label{thm:stoch-tsp-main}
	For any phase $i \ge 1$, we have $u_i \le {u_{i-1} \over 4} +
	u^*_i$. 
\end{lemma}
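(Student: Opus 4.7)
The plan is to show that, within phase $i$, \adTSP essentially ``competes'' against OPT up to distance $2^i$: either OPT itself fails (contributing $u^*_i$ to the bound) or the $\alpha$ orienteering rounds shrink the residual to zero with probability at least $3/4$ (contributing $u_{i-1}/4$). Formally, I would decompose
$$u_i \;\le\; \Pr[E_i \cap G] + \Pr[G^c] \;=\; \Pr[E_i \cap G] + u^*_i,$$
where $E_i$ is the event ``the algorithm continues past phase $i$'' and $G$ is the event ``OPT finishes within distance $2^i$''. It then suffices to prove $\Pr[E_i \cap G] \le u_{i-1}/4$.

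First I would benchmark the $w$-profit available in each iteration. Let $\tau^*$ be OPT's adaptive tour truncated at length $2^i$; on $G$, $R(\tau^*) = \sum_{v \in \tau^*} R_v \ge k \ge K_t$, where $K_t := k - k(\sigma)$ is the residual at the start of iteration $t$ of phase $i$. Combining the subadditivity inequality $\min(R(\tau^*), K_t) \le \sum_{v \in \tau^*} \min(R_v, K_t)$ with the independence of each $R_v$ from the indicator $\mathbf{1}[v \in \tau^*]$ (which depends only on rewards observed before visiting $v$), an averaging argument produces a deterministic tour of length $\le 2^i$ whose $w$-profit with $w_v = \E[\min(R_v, K_t)]$ is at least $K_t$. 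The $\rho$-approximation for orienteering then yields a tour $\pi_t$ of $w$-profit $\ge K_t/\rho$.

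Next, I would translate $w$-profit into a multiplicative decay $\E[K_{t+1} \mid K_t] \le (1-\delta) K_t$ with $\delta = \Omega(1/\rho)$. Letting $Y_t = \sum_{v \in \pi_t} R_v$, the residual update is $K_{t+1} = (K_t - Y_t)^+$, so $\E[K_t - K_{t+1} \mid K_t] = \E[\min(Y_t, K_t) \mid K_t]$. This is where I expect the main obstacle: the $w$-profit is only an \emph{upper} bound on this expectation (by the same subadditivity), so a lower bound requires a case split. If some vertex $v \in \pi_t$ already satisfies $\E[\min(R_v, K_t)] \gtrsim K_t$, then a single visit clears a constant fraction of the residual with positive probability; otherwise all vertex contributions are small compared to $K_t$, and a concentration argument for $\sum_v \min(R_v, K_t)$ shows $\E[\min(Y_t, K_t)\mid K_t] = \Omega(K_t/\rho)$.

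Finally I would iterate and conclude. Over the $\alpha = c H_k$ rounds of phase $i$, this gives $\E[K_{\alpha+1} \mid G] \le k (1-\delta)^\alpha \le 1/4$ for a suitable constant $c$. Since $K_{\alpha+1}$ is a nonnegative integer, Markov's inequality gives $\Pr[K_{\alpha+1} \ge 1 \mid G] \le 1/4$, i.e., $\Pr[E_i \cap G] \le u_{i-1}/4$, completing the proof. A secondary technicality is ensuring that the conditioning on $G$ does not distort the per-iteration benchmark; I would handle this by relying only on the deterministic inequality $R(\tau^*) \ge K_t$ that holds on $G$, avoiding any conditional-independence manipulations of the $R_v$'s.
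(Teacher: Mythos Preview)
Your decomposition $u_i \le \Pr[E_i\cap G] + u^*_i$ is fine, but the benchmark step is where the argument breaks. You claim that, on $G$, an averaging argument over OPT's (random) tour $\tau^*$ yields a \emph{deterministic} tour of length at most $2^i$ whose $w$-profit is at least $K_t$, where $w_v=\E[\min(R_v,K_t)]$. This is false in general: with a single vertex having $R_v\in\{0,k\}$ each with probability $\tfrac12$, we have $K_t=k$ but the only tour has $w$-profit $k/2$. The point is that the orienteering instance --- and hence its optimum $W^*$ --- is fixed by the \emph{unconditional} expectations $w_v$; conditioning on $G$ cannot raise $W^*$. The averaging argument you sketch needs $\E[\mathbf{1}[v\in\tau^*]\,\min(R_v,K_t)\mid G]=\Pr[v\in\tau^*\mid G]\cdot w_v$, but the martingale/independence property you invoke holds only unconditionally; once you condition on the global event $G$ (which depends on all rewards, including those not yet seen by the algorithm), it fails. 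This is precisely the issue you flag at the end as ``secondary,'' but it is the whole difficulty, not a technicality.

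What the paper does instead is to accept the weaker (and correct) orienteering bound $W^*\ge K_t\cdot p^*_i(\sigma)$, where $p^*_i(\sigma)$ is the probability that OPT finishes within $2^i$ \emph{conditioned only on the algorithm's observations} $\sigma$; since $\sigma$ concerns only visited vertices, independence at unvisited vertices is preserved. Because $p^*_i(\sigma)$ can be tiny, a straight multiplicative-decay-plus-Markov argument no longer works. The paper therefore switches to a potential/accounting argument: it defines the per-iteration relative gain $\gain(\sigma)=\E[\min(\sum_{v\in\pi}R_v,K_t)]/K_t$, shows (using the truncated-sum inequality you also anticipate in your step~4) that $\gain(\sigma)\ge c\,(p^*_i(\sigma)-I(\sigma))$, and then bounds the phase-$i$ total $\Delta=\sum_t \E[\gain]$ from below by $c\alpha(u_i-u^*_i)$ (via $\E_\sigma[p^*_i(\sigma)]=1-u^*_i$) and from above by $H_k\,u_{i-1}$ (via integrality of rewards and a harmonic-sum bound). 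Combining these two bounds with $\alpha=\Theta(H_k)$ gives the lemma. In short, your ``constant multiplicative shrinkage conditioned on $G$'' plan cannot be made to work; the correct route is the paper's ``small gain proportional to $p^*_i(\sigma)$, summed and sandwiched.''
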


Using Lemma~\ref{thm:stoch-tsp-main}, we can finish the
analysis as follows.
	Let \ALG denote the expected length of the tour constructed by
	the \adTSP algorithm. Let \OPT denote the expected length of
	the optimal adaptive tour. The total distance traveled by the \adTSP
	algorithm in the first $i$ phases is at most
		 $\alpha \sum_{j = 0}^i 2^j \le 2^{i + 1} \alpha$. 
Using this  we obtain $\ALG \leq 2\alpha \sum_{i \geq 1} 2^i u_i + 4\alpha$. 	Also,
$\OPT \geq \frac12 \sum_{i \geq 1} 2^i u^*_i + u^*_0 = \frac12 \sum_{i \geq 1} 2^i u^*_i + 1$.	Letting $T:=
\sum_{i \geq 1} 2^i u_i$ and using Lemma~\ref{thm:stoch-tsp-main}, we obtain 
$$T \,\, \le  \,\, \frac14 \sum_{i\ge 1} 2^i\cdot u_{i-1} +  \sum_{i\ge 1} 2^i\cdot u_i^*  \,\, \le  \,\, \frac12 \sum_{i\ge 1} 2^i\cdot u_{i} +  \sum_{i\ge 1} 2^i\cdot u_i^* +\frac12 \,\, \le  \,\, \frac12 T + 2\cdot \OPT-\frac32.$$
It follows that $T\le 4\cdot \OPT-3$ and $\ALG\le 8\alpha\cdot \OPT$. Thus we obtain an $8\alpha$-approximation algorithm.

\medskip\noindent
Now we turn to the proof of Lemma~\ref{thm:stoch-tsp-main}. We start by introducing some
notation. Consider an arbitrary point in the execution of algorithm \adTSP, and let
$\left<S, \sigma, k(\sigma)\right>$ be a triple in which $S$ is the
set of vertices visited so far, $\sigma$ is the observed 
instantiation of $S$, and $k(\sigma)$ is the total
reward observed in $S$. We refer to such a triple $\left<S, \sigma,
k(\sigma)\right>$ as the \emph{state} of the algorithm.


Here is an outline of the proof. First, in  Lemma~\ref{lem:opt-to-orient-tour} we relate the optimal value of the   
orienteering  instance in any iteration to the optimal adaptive solution. Then in Lemma~\ref{lem:additional-reward} we show that the expected increase in reward (relative to the residual target)  in any iteration  with length bound $2^i$  is at least a constant fraction of the (conditional) probability that the optimal adaptive solution completes by time $2^i$. This is used in Claim~\ref{cl:gain-lb} to lower bound  the total  increase in reward (relative to the residual target) in phase $i$. Combined with an upper bound on this increase (Claim~\ref{cl:gain-ub}) we complete the proof of Lemma~\ref{thm:stoch-tsp-main}. 

\begin{lemma}\label{lem:opt-to-orient-tour}
	Consider a state $\left<S, \sigma, k(\sigma) \right>$ of the
	\adTSP algorithm. Consider the \textsc{Orienteering} instance
	in which each vertex in $S$ is assigned a profit of zero and each
	vertex $v$ not in $S$ is assigned a profit of
	$\Expectation{\min\set{R_v, k - k(\sigma)}}$. There is an
	orienteering tour from $r$ of length at most $2^i$ with profit at
	least $(k-k(\sigma))\cdot p^*_i(\sigma)$, where
		$$p^*_i(\sigma) = \Probability{\mbox{ optimal policy
		completes before distance }2^i \sep \sigma\,}.$$
\end{lemma}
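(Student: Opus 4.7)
The plan is to extract the desired deterministic orienteering tour from a truncated version of the optimal adaptive policy. Fix the state $\langle S,\sigma,k(\sigma)\rangle$ and condition all probabilities/expectations below on $\sigma$. Consider the optimal adaptive policy, and run it only until either (a) it collects a total of $k-k(\sigma)$ additional reward, or (b) the distance it has traveled from $r$ reaches $2^i$. The sequence of vertices it visits up to that truncation, followed by a direct return to $r$, defines a random tour $\pi^*$ whose length is at most $2^i$ by construction (using the triangle inequality to bound the return leg by $2^i$; one can instead bound the traveled distance by $2^i$ and absorb the factor into the constant, but in fact the definition of $u_i^*$ in this paper is in terms of distance traveled, so $\pi^*$'s length is $\le 2^i$ directly).

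Next I would compute the conditional expected reward collected along $\pi^*$. With probability $p_i^*(\sigma)$ the optimal policy completes before distance $2^i$, in which case the reward collected along $\pi^*$ is at least $k-k(\sigma)$; otherwise it is nonnegative. Hence
\[
\E\!\left[\min\!\left\{\textstyle\sum_{v\in\pi^*}R_v,\; k-k(\sigma)\right\}\;\middle|\;\sigma\right] \;\ge\; (k-k(\sigma))\cdot p_i^*(\sigma).
\]
Since per-coordinate truncation dominates aggregate truncation, i.e.\ $\sum_v \min\{R_v,k-k(\sigma)\}\ge \min\{\sum_v R_v,k-k(\sigma)\}$, this implies
\[
\E\!\left[\textstyle\sum_{v\in\pi^*}\min\{R_v,\, k-k(\sigma)\}\;\middle|\;\sigma\right] \;\ge\; (k-k(\sigma))\cdot p_i^*(\sigma).
\]

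Now I would swap the order of expectation using independence of the rewards. Let $X_v=\mathbf{1}[v\in\pi^*]$. The key observation is that the event $\{v\in\pi^*\}$ is determined by the rewards of vertices strictly preceding $v$ along the optimal policy's execution (together with $\sigma$), and since rewards are independent, $X_v$ is independent of $R_v$ conditional on $\sigma$. Therefore
\[
\E\!\left[\textstyle\sum_{v}X_v\min\{R_v,k-k(\sigma)\}\;\middle|\;\sigma\right] \;=\; \sum_{v\notin S}\Pr[v\in\pi^*\mid\sigma]\cdot w_v,
\]
where $w_v=\E[\min\{R_v,k-k(\sigma)\}]$ is exactly the profit assigned in the orienteering instance (and $w_v=0$ for $v\in S$). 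The right-hand side is precisely the expected orienteering profit of the random tour $\pi^*$. By averaging, there must exist a fixed realization of $\pi^*$ — a deterministic tour of length at most $2^i$ from $r$ — whose profit is at least $(k-k(\sigma))\cdot p_i^*(\sigma)$, which is what the lemma claims.

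The main subtlety is the independence step: one must be careful that the set $\pi^*$ depends only on rewards of vertices visited before $v$, so that conditioning on $v\in\pi^*$ does not bias $R_v$. The per-coordinate vs.\ aggregate truncation inequality is standard but worth pointing out, since it is what lets us pass from the actual completion event of the optimal policy to the linear profit function used by the orienteering instance.
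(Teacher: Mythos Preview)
Your approach is essentially the same as the paper's: condition the optimal policy on $\sigma$, truncate at distance $2^i$, lower bound the expected truncated reward by $(k-k(\sigma))\,p_i^*(\sigma)$, use independence of $R_v$ from the event $\{v\in\pi^*\}$ to pass to the deterministic profits $w_v$, and then average to extract a single good tour. The paper phrases this via the decision tree $\sT^*(S,\sigma)$ and root-to-leaf paths, while you phrase it via indicators $X_v$, but the content is identical.

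There is, however, a small but genuine gap in your displayed equality
\[
\E\!\left[\textstyle\sum_{v}X_v\min\{R_v,k-k(\sigma)\}\;\middle|\;\sigma\right] \;=\; \sum_{v\notin S}\Pr[v\in\pi^*\mid\sigma]\cdot w_v.
\]
The left side includes vertices $v\in S$, and for those the conditional reward is the fixed value $\min\{\sigma_v,k-k(\sigma)\}$, which need not be zero; yet on the right side you have dropped them (since $w_v=0$ for $v\in S$). So the equality is false as written, and the inequality you actually get goes the wrong way for your purposes. The fix is exactly what the paper does: set the modified reward $\hat R_v:=0$ for $v\in S$ \emph{before} the truncation step, and observe that when the optimal policy completes (total reward $\ge k$) the contribution from $V\setminus S$ alone is already at least $k-k(\sigma)$, since the $S$-vertices contribute exactly $k(\sigma)$. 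With that adjustment, your per-coordinate truncation and independence steps go through verbatim.
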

\begin{proof} 
Consider the tree $\sT^*$ representing the optimal policy. We {\em condition} on the instantiations $\sigma$ on vertices $S$ to obtain tree $\sT^*(S, \sigma)$.   Formally, we start with $\sT^*(S, \sigma) = \sT^*$ and apply the following transformation for each vertex $v\in S$ with instantiation $\sigma_v$: at each node
$\nu$ in $\sT^*(S, \sigma)$ corresponding to $v$, we remove all subtrees of $\nu$ except the subtree corresponding to instantiation $\sigma_v$; additionally, the probability of this edge (labeled $\sigma_v$) is set to one. Note that the probabilities at nodes corresponding to vertices $V\setminus S$ are unchanged, since rewards at different vertices are independent.

Now, {\em mark} those nodes in $\sT^*(S, \sigma)$ that correspond to completion (i.e. reward at least $k$ is collected) within distance $2^i$. Note that the probability of reaching a marked node is exactly $p^*_i(\sigma)$. Finally, let $\sT$ denote the subtree of $\sT^*(S, \sigma)$ containing only those nodes that have a marked descendant. When tree $\sT$ is traversed, the probability of reaching a leaf-node is $p^*_i(\sigma)$; with the remaining probability the traversal ends at some internal node.  Clearly, every leaf-node in $\sT$ is marked and hence corresponds to an $r$-tour of length at most $2^i$ along with instantiated rewards that sum to at least $k$. Define modified rewards at nodes of $\sT$ as follows:
$$\hat{R}_v = \left\{
\begin{array}{ll}
\min\{R_v,k-k(\sigma)\}& \mbox{ if }v\not\in S\\
0&\mbox{ if }v \in S
\end{array}\right.$$
Notice that the profits in the deterministic \textsc{Orienteering} instance are precisely $w_v=\E[\hat{R}_v]$. Observe that the sum of modified rewards at each leaf of $\sT$ is at least $k-k(\sigma)$. Thus the expected $\hat{R}$-reward obtained in $\sT$ is $\hat{E}\ge (k-k(\sigma))\cdot p^*_i(\sigma)$. Also,
$$\hat{E} = \sum_{\nu\in \sT} \Pr[\mbox{reach }\nu]\cdot \E\left[ \hat{R}_\nu \,|\, \mbox{reach }\nu\right] = \sum_{\nu\in \sT} \Pr[\mbox{reach }\nu]\cdot \E\left[ \hat{R}_\nu \right] = \sum_{\nu\in \sT} \Pr[\mbox{end at }\nu]\cdot \left(\sum_{\mu\preceq \nu} w_\mu\right).$$
Above, for a node $\nu\in \sT$, we use $\hat{R}_\nu$ and $w_\nu$ to denote the respective variables for the vertex (in $V$) corresponding to $\nu$. Also, the notation $\mu\preceq \nu$ refers to node $\mu$ being an ancestor of node $\nu$ in $\sT$. The first equality is by definition of $\hat{E}$, the second uses the fact that $\{\hat{R}_v:v\in V\}$ are independent, and the last is an interchange of summation. By averaging, there is some node $\nu'\in\sT$ with $\sum_{\mu\prec \nu'} w_\mu\ge \hat{E}\ge (k-k(\sigma))\cdot p^*_i(\sigma)$. We can assume  that $\nu'$ is a leaf node (otherwise we can reset $\nu'$ to be any leaf node of $\sT$ below $\nu'$). So the $r$-tour corresponding to this node $\nu'$ has length at most $2^i$ and is feasible for the  deterministic \textsc{Orienteering} instance. Moreover, it has profit at least  $(k-k(\sigma))\cdot p^*_i(\sigma)$ as claimed. 
\end{proof}

\begin{lemma} \label{lem:additional-reward}
	Consider a state $\left<S, \sigma, k(\sigma) \right>$ of the
	\adTSP algorithm with $k(\sigma)<k$. Consider the \textsc{Orienteering} instance
	with profits $\{w_v:v\in V\}$ where $w_v=0$ for $v\in S$ and $w_v=\Expectation{\min\set{R_v, k - k(\sigma)}}$ for $v\not\in S$. Let $\pi$ be any $\rho$-approximate
	orienteering tour consisting of vertices $V(\pi)$. Then,
		$$\E\left[ \min\set{\sum_{v \in V(\pi) \setminus S} R_v,\, k -
		k(\sigma)} \right] \quad \geq \quad {1 \over \rho} \cdotp \left(1-\frac{1}{e}\right)\cdot (k - k(\sigma))	\cdotp p_i^*(\sigma).$$ 
\end{lemma}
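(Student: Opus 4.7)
Write $M := k - k(\sigma)$. The first step is to combine Lemma~\ref{lem:opt-to-orient-tour} with the $\rho$-approximation guarantee of $\pi$: Lemma~\ref{lem:opt-to-orient-tour} yields an orienteering tour of length at most $2^i$ with profit at least $M\cdot p_i^*(\sigma)$, so
$$\sum_{v \in V(\pi)\setminus S} w_v \;\ge\; \frac{M\cdot p_i^*(\sigma)}{\rho}.$$
Next, I would introduce the truncated rewards $X_v := \min\{R_v, M\} \in [0,M]$ for $v \in V(\pi) \setminus S$; these are independent with $\E[X_v] = w_v$, and since $R_v \ge X_v$ pointwise, $\min\{\sum_v R_v, M\} \ge \min\{\sum_v X_v, M\}$. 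Setting $\mu := \sum_v w_v$, the bound above gives $\min\{\mu, M\} \ge M\cdot p_i^*(\sigma)/\rho$, so the lemma reduces to the general inequality $\E\!\left[\min\!\left\{\sum_v X_v, M\right\}\right] \ge (1 - 1/e)\min\{\mu, M\}$.

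The main obstacle is to prove this general inequality with the sharp constant $1 - 1/e$; a naive bound via $\min\{y,M\}\ge M(1-e^{-y/M})$ combined with Jensen-type convexity of $e^{-x}$ on $[0,1]$ yields only a $1-e^{-(1-1/e)}$ factor, which is too weak. Instead I would use a Bernoulli reduction. For every $t \ge 0$ the map $x \mapsto \min\{x + t, M\}$ is concave on $[0, M]$, and any concave function $\phi$ on $[0, M]$ satisfies $\phi(x) \ge (x/M)\phi(M) + (1 - x/M)\phi(0)$. Conditioning on the other variables and applying this pointwise bound shows that replacing any single $X_v$ by the two-point distribution on $\{0, M\}$ with the same mean only decreases $\E[\min\{\sum_u X_u, M\}]$. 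Iterating over $v$, it suffices to prove the inequality in the special case where each $X_v \in \{0,M\}$ with $p_v := \Pr[X_v = M] = \E[X_v]/M$; then $\min\{\sum_v X_v, M\} = M\cdot \mathbf{1}[\text{some }X_v = M]$, giving
$$\E\!\left[\min\!\left\{\sum_v X_v, M\right\}\right] \;=\; M\Big(1 - \prod_v (1 - p_v)\Big) \;\ge\; M\bigl(1 - e^{-\mu/M}\bigr),$$
using $1 - p \le e^{-p}$ and $\sum_v p_v = \mu/M$. The proof concludes with the elementary inequality $M(1 - e^{-\mu/M}) \ge (1-1/e)\min\{\mu, M\}$: for $\mu \le M$ this follows by concavity of $x \mapsto 1 - e^{-x}$ on $[0,1]$ together with matching endpoint values $0$ and $1 - 1/e$, and for $\mu \ge M$ it follows since $1 - e^{-\mu/M} \ge 1 - 1/e$. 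Applying this to $\mu \ge M\cdot p_i^*(\sigma)/\rho$ yields the claimed bound.
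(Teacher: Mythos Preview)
Your proof is correct and follows essentially the same route as the paper: normalize the truncated rewards, invoke Lemma~\ref{lem:opt-to-orient-tour} together with the $\rho$-approximation guarantee to lower-bound $\sum_v \E[X_v]$, and then apply the inequality $\E[\min\{\sum_v X_v,M\}]\ge(1-1/e)\min\{\sum_v\E[X_v],M\}$ for independent $[0,M]$-valued $X_v$. The only difference is that the paper quotes this last inequality as a black box (Theorem~\ref{th:indep-cap-exp}, from \cite{AMMPS11}), whereas you supply a self-contained proof via the Bernoulli reduction; your argument for that reduction (concavity of $x\mapsto\min\{x+t,M\}$ on $[0,M]$, chord bound, then the $1-\prod(1-p_v)\ge 1-e^{-\sum p_v}$ computation) is clean and correct.
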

\begin{proof} For each $v\in V(\pi)\setminus S$ define $X_v:=\min\left\{\frac{R_v}{k-k(\sigma)},\, 1\right\}$. 
Note that $X_v$s are independent $[0,1]$ random variables, and $\E[X_v]=\frac{w_v}{k-k(\sigma)}$. Let $X:=\sum_{v\in V(\pi)\setminus S} X_v$ and $Y=\min(X,1)$.   So the profit obtained by solution $\pi$ to the deterministic orienteering instance is $(k-k(\sigma))\cdot\E[X]$. Using Lemma~\ref{lem:opt-to-orient-tour} and the fact that $\pi$ is a $\rho$-approximate solution, we obtain $\E[X]\ge {1 \over \rho} \cdotp p_i^*$. We can now complete the proof of the lemma by applying Theorem~\ref{th:indep-cap-exp} below.
\end{proof}

\begin{theorem}[\cite{AMMPS11}]\label{th:indep-cap-exp}
Given a set $\{X_v\}$ of independent $[0,1]$ random variables with $X=\sum X_v$ and $Y=\min(X,1)$, we have $\E[Y]\ge (1-1/e)\cdot \min\{\E[X],1\}$.
\end{theorem}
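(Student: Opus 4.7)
The plan is to dominate $Y$ below pointwise by a random variable with a product structure, so that independence turns the lower bound into an explicit expression in the means $\mu_v := \E[X_v]$. Specifically, I would prove the deterministic inequality
\[
\min\!\Bigl(\sum_v X_v,\,1\Bigr) \;\ge\; 1 - \prod_v (1 - X_v) \qquad \text{for all } X_v \in [0,1].
\]
This splits into two cases. If $\sum_v X_v \ge 1$, the left side equals $1$ and the right side is at most $1$. If $\sum_v X_v < 1$, the inequality reduces to $\prod_v (1-X_v) \ge 1 - \sum_v X_v$, which is the classical Weierstrass product inequality for $X_v \in [0,1]$ (proved by induction on the number of terms).

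Taking expectations and using independence of the $X_v$'s then yields
\[
\E[Y] \;\ge\; 1 - \prod_v (1 - \mu_v) \;\ge\; 1 - \exp\!\Bigl(-\sum_v \mu_v\Bigr) \;=\; 1 - e^{-\mu},
\]
where $\mu := \E[X] = \sum_v \mu_v$, using $1-x \le e^{-x}$ in the middle step.

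Finally, I would conclude by comparing $1 - e^{-\mu}$ to $(1 - 1/e)\cdot \min\{\mu, 1\}$. When $\mu \ge 1$, monotonicity of $1 - e^{-\mu}$ gives $1 - e^{-\mu} \ge 1 - 1/e$ directly. When $\mu \in [0,1]$, the function $\mu \mapsto 1 - e^{-\mu}$ is concave on $[0,1]$ with values $0$ at $\mu=0$ and $1 - 1/e$ at $\mu=1$, so it lies above its chord $\mu \mapsto (1 - 1/e)\mu$ throughout $[0,1]$. Either way, $\E[Y] \ge (1 - 1/e)\min\{\mu, 1\}$, proving the theorem.

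The main obstacle is really the pointwise bound in the first step; once one has the clean product inequality $\min(\sum X_v, 1) \ge 1 - \prod(1-X_v)$, everything that follows is a mechanical use of independence plus $1 - e^{-\mu}$ versus its chord. The Weierstrass inequality for the sub-$1$ case is the only non-mechanical ingredient, and it is essentially a one-line induction.
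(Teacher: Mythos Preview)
Your argument is correct. Note, however, that the paper does not actually prove this theorem: it is quoted from \cite{AMMPS11} and used as a black box in the proof of Lemma~\ref{lem:additional-reward}. So there is no ``paper's own proof'' to compare against; you have supplied a clean self-contained proof where the paper simply cites the literature. The route you take---the pointwise bound $\min(\sum_v X_v,1)\ge 1-\prod_v(1-X_v)$ via Weierstrass, then independence, then the concavity/chord comparison for $1-e^{-\mu}$---is the standard one and all steps are sound.
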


Now we are ready to prove Lemma~\ref{thm:stoch-tsp-main}.
\begin{proofof}{Lemma~\ref{thm:stoch-tsp-main}} Consider any phase $i\ge 1$ and let $\state{S, \sigma,  k(\sigma)}$ be the state of the \adTSP algorithm at the start of some iteration of the inner loop. Let $\pi$ be the orienteering tour that the algorithm visits next. Define
		$$\gain(\state{S, \sigma}) \, := \,{\E\left[\min\set{\sum_{v \in V(\pi) \setminus S} R_v, \, k - k(\sigma)}\right] \over k -k(\sigma)}  \,\, \mbox{ if } k(\sigma)<k,  $$
and $\gain(\state{S, \sigma}) := 0$ if $k(\sigma)\ge k$.		
The quantity $\gain(\state{S, \sigma})$ measures the expected fraction of the residual target that we cover after visiting $\pi$.

Let $I(\sigma) = [k(\sigma) \geq k]$ be an indicator variable
	that is equal to one if $k(\sigma) \geq k$ and it is equal to
	zero otherwise. We have:
	\begin{equation} \label{eq:eq1}
		\gain(\state{S, \sigma}) \quad \geq \quad {1 \over \rho} \cdotp \left(1-\frac{1}{e}\right)\cdotp \left( p^*_i(\sigma) - I(\sigma)\right).
	\end{equation}
To see this, note that if $I(\sigma)$ is one, the gain is zero and the inequality above
	holds since $p^*_i(\sigma) \leq 1$; and if $I(\sigma)$ is zero, this inequality 
	follows from Lemma~\ref{lem:additional-reward}.

Recall that	there are $\alpha$ iterations of the inner loop in  phase $i$, and we use $t \in [\alpha]$ to index the iterations. For each iteration $t$ (of phase $i$), let
	$S_t$ be the random variable denoting the vertices visited until iteration $t$, and let $\sigma_t$ denote the reward instantiations of $S_t$. Define:
		$$G_t \quad := \quad \E_{\state{S_t, \sigma_t}} [\gain(\state{S_t,
		\sigma_t})]$$
	Let $\Delta = \sum_{t = 1}^{\alpha} G_t$ denote the total gain in phase $i$. The proof relies on  upper and lower
	bounding $\Delta$, which is done in the next two claims.

	\begin{claim}\label{cl:gain-ub}
		We have $\Delta \leq H_{k} \cdotp u_{i - 1}$.
	\end{claim}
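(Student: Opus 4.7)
The plan is to bound $\Delta$ by setting up a telescoping sum over the residual target during phase $i$. Let $r_t := \max\{k - k(\sigma_{t-1}), 0\}$ denote the residual target at the start of iteration $t$ of phase $i$, and $r_{\alpha+1}$ the residual at the end of the phase (with the convention that if the algorithm has already terminated before reaching that point, the residual stays at whatever value it had when termination occurred, or simply equals $0$ if termination was due to $k(\sigma) \geq k$). Observe that the actual increase in reward in iteration $t$ is exactly $r_t - r_{t+1} = \min\{\sum_{v \in V(\pi_t)\setminus S_{t-1}} R_v,\, r_t\}$, so conditional on the state at the start of iteration $t$ we have $\E[r_t - r_{t+1} \mid \sigma_{t-1}] = \gain(\langle S_{t-1}, \sigma_{t-1}\rangle)\cdot r_t$ when $r_t>0$, and $0$ otherwise.

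The key deterministic fact I will use is the following elementary harmonic inequality: for integers $s \geq 1$ and $0 \leq j \leq s$,
\[
H_s - H_{s-j} \;=\; \sum_{\ell=s-j+1}^{s} \frac{1}{\ell} \;\geq\; \frac{j}{s},
\]
since the sum contains $j$ terms each at least $1/s$. Applying this pointwise with $s = r_t$ and $j = r_t - r_{t+1}$ gives $H_{r_t} - H_{r_{t+1}} \geq (r_t - r_{t+1})/r_t$ whenever $r_t > 0$, and both sides equal $0$ when $r_t = 0$. Taking conditional expectation with respect to $\sigma_{t-1}$ therefore yields
\[
\E\bigl[H_{r_t} - H_{r_{t+1}} \,\big|\, \sigma_{t-1}\bigr] \;\geq\; \gain(\langle S_{t-1}, \sigma_{t-1}\rangle),
\]
which on taking full expectation gives $\E[H_{r_t} - H_{r_{t+1}}] \geq G_t$.

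Summing over $t = 1, \ldots, \alpha$ telescopes to
\[
\Delta \;=\; \sum_{t=1}^{\alpha} G_t \;\leq\; \E\bigl[H_{r_1} - H_{r_{\alpha+1}}\bigr] \;\leq\; \E[H_{r_1}].
\]
Finally, $r_1$ is the residual target at the start of phase $i$. With the convention $H_0 = 0$, this contributes zero on all sample paths where the algorithm has already terminated before phase $i$ (the complement of which occurs with probability exactly $u_{i-1}$); on the remaining paths we have $r_1 \leq k$, hence $H_{r_1} \leq H_k$. Therefore $\E[H_{r_1}] \leq u_{i-1}\cdot H_k$, proving the claim.

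The only subtle point is bookkeeping around early termination: I must handle both the case where the algorithm stops because $k(\sigma)\geq k$ (captured automatically since $\gain = 0$ and $r_t = 0$ thereafter) and the case where all vertices have been exhausted (where the $w_v = 0$ definition still makes $\gain = 0$, so extending the indexing $r_t \equiv 0$ after termination is consistent). Once this bookkeeping is in place, the harmonic-telescoping argument is routine.
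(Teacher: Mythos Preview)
Your approach is essentially the paper's: both bound, on each sample path, the sum of per-iteration relative gains $\sum_t J_t/r_t$ by the harmonic number of the initial residual and then average. The paper conditions on the full instantiation $q$ and asserts $\Delta(q)\le \sum_t J_t/(L-J_1-\cdots-J_{t-1})\le H_L\le H_k$ directly; your telescoping via $j/s\le H_s-H_{s-j}$ is precisely the inequality underlying that step.

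One small bookkeeping slip: after telescoping you drop $H_{r_{\alpha+1}}$ and then claim $\E[H_{r_1}]\le u_{i-1}H_k$ on the grounds that $r_1=0$ whenever the algorithm has already terminated. That fails on paths where termination before phase $i$ was due to exhausting all vertices with $k(\sigma)<k$: under your stated convention $r_1>0$ there, so $H_{r_1}>0$ yet such paths are \emph{not} counted in $u_{i-1}$. (Your last-paragraph fix of redefining $r_t\equiv 0$ after such termination is inconsistent with your earlier convention and would break the identity $r_t-r_{t+1}=\min\{\sum R_v,\,r_t\}$ at the termination step.) The clean fix is simply not to drop $H_{r_{\alpha+1}}$: the telescoped quantity $H_{r_1}-H_{r_{\alpha+1}}$ is zero on every path where phase $i$ is not reached (since then $r_1=r_{\alpha+1}$), and is at most $H_k$ on the event of probability $u_{i-1}$ that phase $i$ is reached. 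With that adjustment your argument is correct and coincides with the paper's.
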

	\begin{proof}
Let $\Delta(q)$ denote the value of $\Delta$ conditioned on the instantiations $q$ of all random variables. If \adTSP (conditioned on $q$) finishes before phase $i$ then gain in each iteration is zero and $\Delta(q)=0$.

In the following, we assume that \adTSP (conditioned on $q$) reaches phase $i$. Let $L\le k$ denote the residual target at the start of phase $i$, and $J_1,\ldots,J_\alpha\in \mathbb{Z}_+$ the incremental rewards obtained in each of the $\alpha$ iteration of phase $i$; recall that all rewards are integral. Then,
$$\Delta(q) \,\,\le \,\,\sum_{t = 1}^{\alpha} {J_t \over L -
J_1 - \cdots - J_{t -1}} \,\, \leq \,\, \sum_{t = 1}^L {1 \over t} \,\,=\,\, H_{L}\,\,\le\,\, H_{k}$$
The claim now follows since the algorithm reaches phase $i$ only with probability $u_{i - 1}$.\end{proof}

	\begin{claim}\label{cl:gain-lb} We have $\Delta \geq {\alpha \over \rho} \cdotp \left(1-\frac{1}{e}\right) \cdotp (u_i - u^*_i)$.	
	\end{claim}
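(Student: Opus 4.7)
The plan is to prove the claim by summing the pointwise inequality~(\ref{eq:eq1}) over all $\alpha$ iterations of phase $i$ and then estimating the two resulting sums separately. By~(\ref{eq:eq1}) and linearity of expectation,
$$\Delta \;=\; \sum_{t=1}^{\alpha} G_t \;\ge\; \frac{1}{\rho}\left(1-\frac{1}{e}\right) \sum_{t=1}^{\alpha}\bigl(\E[p^*_i(\sigma_t)] - \E[I(\sigma_t)]\bigr).$$

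For the first sum, I would view the execution of \adTSP through the filtration $\{\mathcal{F}_t\}$, where $\mathcal{F}_t$ is the $\sigma$-algebra generated by the (random) pair $(S_t,\sigma_t)$. By the definition of $p^*_i(\sigma)$ as a conditional probability, $p^*_i(\sigma_t) = \Pr[\text{OPT finishes within distance } 2^i \mid \mathcal{F}_t]$, so the tower property gives $\E[p^*_i(\sigma_t)] = \Pr[\text{OPT finishes within distance } 2^i] = 1-u^*_i$ for every $t$. Hence the first sum equals exactly $\alpha(1-u^*_i)$. (If \adTSP has terminated before reaching iteration $t$ of phase $i$, we declare $(S_t,\sigma_t)$ to equal the terminal state; this preserves the tower identity and does not affect $\gain$, which is zero on terminal states.)

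For the second sum, since rewards are nonnegative, $k(\sigma_t)$ is monotone nondecreasing in $t$, and therefore $I(\sigma_t)\le I(\sigma_\alpha)$ for all $t\le \alpha$. Moreover, $I(\sigma_\alpha)=1$ forces $k(\sigma_\alpha)\ge k$, which means \adTSP has terminated by the end of phase $i$; hence $\E[I(\sigma_\alpha)]\le 1-u_i$, and consequently $\sum_{t=1}^{\alpha}\E[I(\sigma_t)]\le \alpha(1-u_i)$. Combining the two estimates yields
$$\Delta \;\ge\; \frac{1}{\rho}\!\left(1-\frac{1}{e}\right)\!\bigl(\alpha(1-u^*_i) - \alpha(1-u_i)\bigr) \;=\; \frac{\alpha}{\rho}\!\left(1-\frac{1}{e}\right)(u_i - u^*_i),$$
as claimed. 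The one step requiring care is the tower-property identity, because $S_t$ itself is a random set depending on \adTSP's own observations; introducing the filtration $\{\mathcal{F}_t\}$ handles this cleanly via the law of total probability, and the rest of the argument is a routine telescoping using monotonicity of $I(\sigma_t)$.
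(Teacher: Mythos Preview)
Your proposal is correct and follows essentially the same approach as the paper: both start from inequality~(\ref{eq:eq1}), use the tower property to show $\E[p^*_i(\sigma_t)]=1-u^*_i$, and use monotonicity of the indicator (equivalently, that $\Pr[\text{finished by iteration }t]$ is nondecreasing in $t$) to bound $\E[I(\sigma_t)]\le 1-u_i$. Your filtration language makes the conditioning step slightly more explicit, but the argument is the same.
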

	\begin{proof}
		For any $t \in [\alpha]$, by Inequality~\eqref{eq:eq1}, we have
$$G_t  \quad = \quad \E_{\state{S_t, \sigma_t}}[\gain(\state{S_t,\sigma_t})]
			\quad \geq \quad {1 \over \rho} \cdotp \left(1-\frac{1}{e}\right) \cdotp \E_{\sigma_t}[p^*_i(\sigma_t) -	I(\sigma_t)]$$
		Let $p(t)$ be the probability that \adTSP finishes by
		iteration $t$ of phase $i$. Since the probabilities $p(t)$
		are non-decreasing in $t$, we have $p(t) \leq p(\alpha) = \Pr[\adTSP \mbox{ finishes by phase }i] = 1 - u_i$. For a fixed iteration $t$, the possible outcomes of $\langle S_t, \sigma_t\rangle$ correspond to a partition of the overall sample space.  
So we have $\E_{\sigma_t}[I(\sigma_t)] = p(t) \leq 1 - u_i$ and $\E_{\sigma_t}[p^*_i(\sigma_t)] = \Pr[\mbox{optimal policy completes within distance }2^i] = 1 - u^*_i$. This completes the proof since $\Delta=\sum_{t=1}^\alpha G_t$.\end{proof}
	
It follows from the two claims that
$${\alpha \over \rho} \cdotp (1-1/e)\cdotp \left(u_{i} - u^*_i\right) \quad \leq  \quad  \Delta \quad \le \quad H_k\cdotp u_{i - 1}.$$
Setting $\alpha=4\rho\frac{e}{e-1}\cdot H_k$ implies the lemma.
\end{proofof}

\smallskip
{\sc Example 1.} The analysis of \adTSP is tight up to constant factors, even in the deterministic setting. Consider an instance of deterministic knapsack cover with $k=2^\ell$ (for large integer $\ell$) and $\ell(\ell+1)$ items as follows. For each $i\in\{0,1,\ldots,\ell\}$  there is one item of cost $2^i$ and reward $2^i$, and  $\ell-1$ items of cost $2^i$ and reward $1$. Clearly the optimal cost is $2^\ell$. However the algorithm will select in each iteration $i=0,1,\ldots,\ell-3$, all $\ell$ items of cost $2^i$: note that the reward from these items is at most $\ell^2 + \sum_{i=0}^{\ell-3} 2^i \le \ell^2+2^{\ell-2} < 2^\ell$. So the algorithm's cost is at least $\ell\cdot 2^{\ell-3}$.

\smallskip
{\sc  Example 2.} A natural variant of \adTSP is to perform the inner iterations (for each phase $i=0,1,\ldots$) a constant number of times instead of $\Theta(\log k)$. Indeed, this variant achieves a constant approximation ratio for deterministic $k$-TSP. However, as shown next, such variants perform poorly in the stochastic setting.

Consider the variant of \adTSP that performs $1\le h = o\left(\frac{\log k}{\log\log k}\right)$ iterations in each phase $i$. Choose $t\in \mathbb{Z}_+$ and set $\delta=1/(ht)$ and $k=(ht)^{2ht}\in \mathbb{Z}_+$. Note that $ht=\Theta(\log k/ \log\log k)$.
Define a star metric centered at the depot $r$, with $ht+1$ leaves $\{u_{ij} : 0\le i\le t-1, 0\le j\le h-1\}\bigcup \{w\}$. The distances are $d(r,u_{ij})=2^i$ for all $i\in[t]$ and $j\in [h]$; and $d(r,w)=1$. Each $u_{ij}$ contains three co-located items: one of deterministic reward $(1-\delta)\delta^{hi+j}\cdot k$, and two having reward $\delta^{hi+j}\cdot k$ with probability $\delta$ (and zero otherwise). Finally $w$ contains a deterministic item of reward $k$. By the choice of parameters, all rewards are integer valued. The optimal cost is $2$, just visiting vertex $w$. 

Now consider the execution of the modified \adTSP algorithm. The probability that all the random items in the $u_{ij}$-vertices have zero reward is $(1-\delta)^{2ht}\ge \Omega(1)$. Conditioned on this event, it can be seen inductively that in  the $j^{th}$ iteration of phase $i$ (for all $i$ and $j$),
\begin{itemize}
\item the total observed reward until this point is  $k(1-\delta) \sum_{\ell=0}^{hi+j-1} \delta^\ell  =  k\left(1-\delta^{hi+j}\right)$.
\item the algorithm's tour (and optimal solution to the orienteering instance) involves visiting just vertex $u_{ij}$ and choosing the three items in $u_{ij}$, for a total profit of $(1+\delta)\cdot \delta^{hi+j}\cdot k$. 
\end{itemize}

Thus the expected cost of this algorithm is $\Omega(h\cdot 2^t)$, implying an approximation ratio  $\exp\left(\frac{\log k}{h\cdot \log\log k}\right)$.

\subsection{Non-Adaptive Algorithm}
We now show that the above adaptive algorithm can be simulated in a non-adaptive manner, resulting in an $O(\log^2k)$-approximate non-adaptive algorithm. This also upper bounds the adaptivity gap by the same quantity. Algorithm \naTSP constructs the non-adaptive tour $\tau$ iteratively; $S$ denotes the set of vertices visited in the current tour. 

\begin{algorithm}[h!] \caption{Algorithm \naTSP \label{alg:kTSP-na}} 
  \begin{algorithmic}[1]
    \STATE initialize $\tau, S\gets \emptyset$.
	\FOR{phase $i=0,1,\ldots$}
	\FOR{$t=1,\ldots,\alpha$}
	\FOR{$j=0,1,\ldots,\lfloor \log_2k\rfloor$}
		\STATE define profits as follows
		$$w_v := \left\{ 
\begin{array}{ll}
		\E[\min\{ R_v, k/2^j\}] & \mbox{ for } v\in V\setminus S\\
		0& \mbox{ for } v\in S
\end{array}\right.$$ 
		\STATE \label{step:orient2} using a $\rho$-approximation algorithm for the {\em orienteering} problem, compute a tour $\pi$ originating from $r$ of length at most $2^i$  with maximum total profit. 
\STATE append tour $\pi$ to $\tau$, i.e. $\tau\gets \tau\circ \pi$.
    \ENDFOR
    \ENDFOR
    \ENDFOR
  \end{algorithmic}
\end{algorithm}

The difference from \adTSP is in the inner for loop (indexed by $j$); unlike \adTSP the non-adaptive algorithm does not know the reward accrued so far. Instead we append $\log_2k$ different orienteering tours, corresponding to the possible amounts of reward accrued. We set $\alpha:=c'\cdot H_k$, where $c'$ is a constant to be fixed later. Let $\ell:=1+\lfloor \log_2 k\rfloor$ the number of iterations of the innermost loop. 

The analysis for \naTSP is almost identical to \adTSP. We will show that this is an $(8\alpha\ell)$-approximation algorithm for \stochTSP. As before, each iteration of the outer for loop is called a \emph{phase}, which is indexed by $i$. For any phase $i \geq 0$, define
\begin{eqnarray*}
	u_i &:=& \Probability{\mbox{\naTSP continues beyond phase }	i}\\
	u^*_i &:=& \Probability{\mbox{optimal adaptive policy continues beyond distance } 2^i}
\end{eqnarray*}
Just as in Lemma~\ref{thm:stoch-tsp-main} we will show:
\begin{equation}\label{eq:na-main}
u_i \quad \le \quad {u_{i-1} \over 4} + u^*_i,\qquad \forall i\ge 1.
\end{equation} 
Since the total distance traveled by \naTSP in the first $i$ phases is at most $\ell\alpha\sum_{h=0}^i 2^h\le \ell\alpha\,2^{i+1}$, it follows (as for \adTSP) that the expected length \ALG of \naTSP is at most $8\ell\alpha\cdot \OPT$.

We now prove~\eqref{eq:na-main}. Consider a fixed phase $i\ge 1$ and one of the $\alpha$ iterations of the second for-loop (indexed by $t$). Let $S$ denote the set of vertices visited before the start of this iteration $\langle i,t\rangle$. Let $\sigma$ denote  the reward instantiations at $S$, and $k(\sigma)$ the total reward in $\sigma$. Note that $\sigma$ is only used in the analysis and not in the algorithm. Let $V(i,t)\sse V\setminus S$ be the new vertices visited in iteration $\langle i,t\rangle$; these come from  $\ell$ different subtours corresponding to the inner for-loop. We will show (analogous to Lemma~\ref{lem:additional-reward}) that:
\begin{equation}\label{eq:na-gain}
\E\left[ \min\set{\sum_{v \in V(i,t)} R_v,\, k -
		k(\sigma)} \right] \quad \geq \quad {1 \over 2\rho} \cdotp \left(1-\frac{1}{e}\right)\cdot (k - k(\sigma))	\cdotp p_i^*(\sigma).
\end{equation}
Above, $p^*_i(\sigma) = \Probability{\mbox{ optimal adaptive policy completes before distance }2^i \sep \sigma\,}$. Exactly as in the proof of Lemma~\ref{thm:stoch-tsp-main}, this would imply~\eqref{eq:na-main} when we set $\alpha=8\rho\frac{e}{e-1}\cdot H_k$.

It remains to prove~\eqref{eq:na-gain}. Let $g\in \{0,1,\ldots,\ell\}$ denote the index such that $\frac{k}{2^g}\le k-k(\sigma)<\frac{k}{2^{g+1}}$ and let $c:=k/2^g$. An identical proof to Lemma~\ref{lem:opt-to-orient-tour} implies:
\begin{lemma} \label{lem:na-orient} Consider the \textsc{Orienteering} instance with profits $s_v:=\E[\min\set{R_v, c}]$ for $v\in V\setminus S$ and $s_v :=0$ otherwise. There is an
orienteering tour of length at most $2^i$ with profit at least $c\cdot p^*_i(\sigma)$.
\end{lemma}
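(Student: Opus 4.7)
The plan is to mimic the proof of Lemma~\ref{lem:opt-to-orient-tour} almost verbatim, with the only substantive change being the truncation threshold used to define the modified rewards. First, I would start with the decision tree $\sT^*$ of the optimal adaptive policy, condition on the instantiations $\sigma$ on $S$ exactly as before to obtain $\sT^*(S,\sigma)$, mark the nodes that correspond to completion (total reward $\ge k$) within distance $2^i$, and let $\sT$ denote the subtree consisting of nodes having a marked descendant. By construction, the probability of reaching a leaf of $\sT$ is $p_i^*(\sigma)$, every leaf corresponds to an $r$-tour of length at most $2^i$, and along the $V\setminus S$ portion of any root-to-leaf path, the instantiated rewards sum to at least $k-k(\sigma)\ge c$.

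Next, I would introduce the modified rewards $\hat{R}_v := \min\{R_v,c\}$ for $v\notin S$ and $\hat{R}_v:=0$ for $v\in S$, so that $s_v = \E[\hat{R}_v]$ matches the orienteering profits. The one place the argument differs from Lemma~\ref{lem:opt-to-orient-tour} is the per-leaf lower bound: I claim that at each leaf of $\sT$, the sum of $\hat{R}_v$ along the path is still at least $c$. If any single $R_v$ along the path is $\ge c$, then $\hat{R}_v=c$ for that vertex and the claim is immediate; otherwise every $R_v<c$ forces $\hat{R}_v=R_v$, and the total of these equals the total of $R_v$, which is $\ge k-k(\sigma)\ge c$. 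This is the only real obstacle in the proof, and it is handled cleanly by this case split.

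Having established the per-leaf bound, the expected $\hat{R}$-reward in $\sT$ is at least $c\cdot p_i^*(\sigma)$. I would then repeat the interchange-of-summation computation of Lemma~\ref{lem:opt-to-orient-tour}: using that the $\hat{R}_v$ for $v\notin S$ are independent (conditioning on $\sigma$ does not affect them),
\[
\hat{E}\;=\;\sum_{\nu\in\sT}\Pr[\text{reach }\nu]\cdot \E[\hat{R}_\nu]\;=\;\sum_{\nu\in\sT}\Pr[\text{end at }\nu]\cdot\Bigl(\sum_{\mu\preceq\nu}s_\mu\Bigr).
\]
Averaging produces a node $\nu'\in\sT$ with $\sum_{\mu\preceq\nu'}s_\mu \ge c\cdot p_i^*(\sigma)$, and by descending to a leaf below $\nu'$ if necessary, the corresponding $r$-tour has length at most $2^i$ and orienteering profit at least $c\cdot p_i^*(\sigma)$, proving the lemma.
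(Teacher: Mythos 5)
Your proof is correct and takes the same approach the paper intends; the paper simply states that ``an identical proof to Lemma~\ref{lem:opt-to-orient-tour} implies'' the result, and your adaptation replacing the truncation threshold $k-k(\sigma)$ by $c$ is exactly that. The case split you spell out (either some $R_v\ge c$ contributes $\hat{R}_v=c$ alone, or no truncation occurs and the untruncated sum over $V\setminus S$ is already $\ge k-k(\sigma)\ge c$) is the right justification for the per-leaf bound, and it is precisely the argument that the original proof elides when it observes ``the sum of modified rewards at each leaf of $\sT$ is at least $k-k(\sigma)$.''
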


Let $T\sse V\setminus S$ denote the vertices added in iteration $\langle i,t\rangle$ corresponding to inner loop iterations $j\in \{0,1,\ldots,g-1\}$. Note that in the inner loop iteration $j=g$, we have profits $w_v=\E[\min\set{R_v, c}]$ for $v\in V\setminus (S\cup T)$ and $w_v=0$ otherwise. Lemma~\ref{lem:na-orient} now implies that the optimal profit of the orienteering instance in iteration $j=g$ is at least $c\cdot p^*_i(\sigma) - \sum_{u\in T} \E[\min\set{R_u, c}]$. Let $T'$ denote the vertices added in the inner-loop iteration $j=g$. Since we obtain a $\rho$-approximate solution, it follows that the additional profit obtained $\sum_{v\in T'} w_v \ge \frac1\rho \left( c\cdot p^*_i(\sigma) - \sum_{u\in T} \E[\min\set{R_u, c}]\right)$. So,
$$\sum_{v\in V(i,t)} \E[\min\set{R_v, c}] \quad \ge  \quad \sum_{v\in T\cup T'} \E[\min\set{R_v, c}] 
 \quad \ge \quad \frac{c}{\rho}\cdot p^*_i(\sigma).$$
Exactly as in Lemma~\ref{lem:additional-reward} we now obtain:
$$\E\left[ \min\set{\sum_{v \in V(\pi) \setminus S} R_v,\, c} \right] \quad \geq \quad {1 \over \rho} \cdotp \left(1-\frac{1}{e}\right)\cdot c \cdotp p_i^*(\sigma).$$ 
And using $c\le k-k(\sigma)<2c$, we obtain~\eqref{eq:na-gain}.

\smallskip
{\sc Example 3.} The analysis of \naTSP is tight up to constant factors, even in the deterministic setting. This example is similar to that for \adTSP. Consider an instance of deterministic knapsack cover with $k=2^\ell$ and $\ell^2(\ell+1)$ items as follows. For each $i\in\{0,1,\ldots,\ell\}$  there is one item of cost $2^i$ and reward $2^i$, and  $\ell^2 -1$ items of cost $2^i$ and reward $1$. Clearly the optimal cost is $2^\ell$. However algorithm \naTSP will select in each iteration $i=0,1,\ldots,\ell-3$, all $\ell^2$ items of cost $2^i$: note that the reward from these items is at most $\ell^3 + \sum_{i=0}^{\ell-3} 2^i \le \ell^3+2^{\ell-2} < 2^\ell$. So the algorithm's cost is at least $\ell^2\cdot 2^{\ell-3}$, implying an approximation ratio of $\Omega(\log^2k)$.

\smallskip
{\sc Example 4.} One might also consider the variant of \naTSP where the second for-loop (indexed by $t$) is performed $1\le h\ll \log k$ times instead of $\Theta(\log k)$. The following example shows that such variants have a large approximation ratio. Consider an instance of stochastic knapsack cover with $k=2^\ell$.  There are an infinite number of items, each of cost one and reward $k$ with probability $\frac{2}{3\ell}$ (the reward is otherwise zero). For each $i\in \{0,1,\ldots,\ell/h\}$ and $j\in\{1,\ldots,\ell\}$ there are $h$ items of cost $2^i$ and reward $k/2^j$ with probability $\frac{2}{3\ell}$ (the reward is otherwise zero). The optimal (adaptive and non-adaptive) policy considers only the cost one items with $\{0,k\}$ reward, and has optimal cost $O(\ell)$.
	
Algorithm \naTSP chooses in each iteration $i\in \{0,1,\ldots,\ell/h\}$ (of the first for-loop), iteration $t\in \{1,\cdots,h\}$ (of the second for-loop) and iteration $j\in\{0,1,\ldots,\ell\}$ (of the third for-loop) one of the items of cost $2^i$ with $\{0,k/2^j\}$ reward. (This is an optimal choice for the orienteering instance as rewards are capped at $k/2^j$ and the length bound is $2^i$.) These items have total cost $\Theta(\ell\cdot 2^{\ell/h})$. For each $j\in\{0,1,\ldots,\ell\}$ there are $\frac{\ell}h\cdot h = \ell$ items having reward $k/2^j$ with probability $\frac{2}{3\ell}$. It can be seen that the total reward from these items is less than $k$ with constant probability. So the expected cost of \naTSP is $\Omega(\ell\cdot 2^{\ell/h})$, implying an approximation ratio $\Omega(2^{\ell/h})$.
		
\section{Lower Bound on Adaptivity Gap } We show that the adaptivity gap of \stochTSP is at least $e\approx 2.718$. This holds even in the special case of the {\em stochastic covering knapsack} problem with a single random item.

The gap example is based on the following problem studied in Chrobak et al.~\cite{CKNY08}.
\begin{definition}[Online Bidding Problem] Given input $n\in \mathbb{Z}_+$, an algorithm outputs a randomized sequence $b_1,b_2,\ldots, b_\ell$ of bids from the set $[n]:=\{1,2,\ldots,n\}$. The algorithm's cost under (an unknown) threshold $T\in[n]$ equals the (expected) sum of its bid values until it bids a value at least $T$. The algorithm is $\beta$-competitive if its cost under threshold $T$ is at most $\beta\cdot T$, for all $T\in[n]$.
\end{definition}
\begin{theorem}[\cite{CKNY08}] \label{th:online-bidding}
There is no randomized algorithm for online bidding that is less than $e$-competitive.
\end{theorem}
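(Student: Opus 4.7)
The plan is to invoke Yao's minimax principle: to show that no randomized algorithm is better than $e$-competitive, it suffices to exhibit a distribution $\mu$ on thresholds $T$ such that every \emph{deterministic} bid sequence has expected competitive ratio at least $e - o_n(1)$. A deterministic algorithm may be assumed, without loss of generality, to be a strictly increasing sequence $1 = b_0 < b_1 < \cdots < b_\ell = n$ (bidding below the current maximum is wasted, and the sequence must reach $n$ to handle the worst $T$), and when $T$ lies in the interval $(b_{i-1}, b_i]$ its cost is $S_i := b_1 + \cdots + b_i$.

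For $\mu$ I would take (a discretization of) the log-uniform distribution on $[1,n]$, with density $1/(t \ln n)$. Writing $r_i := b_i / b_{i-1}$, so that $\prod_i r_i = n$, and integrating piecewise, I get
$$\E_{T \sim \mu}\!\left[\frac{\mathrm{cost}(T)}{T}\right] \;=\; \frac{1}{\ln n}\sum_i S_i \left(\frac{1}{b_{i-1}}-\frac{1}{b_i}\right) \;=\; \frac{1}{\ln n}\Bigl(\sum_i r_i \;-\; S_\ell/b_\ell\Bigr),$$
where the second equality follows by swapping the order of summation in $S_i = \sum_{j \le i} b_j$ and telescoping.

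To lower bound this I would use the elementary inequality $r \ge e \ln r$ for all $r > 0$, which follows from the fact that $r/\ln r$ attains its minimum $e$ at $r = e$. Summing over $i$ gives $\sum_i r_i \ge e \sum_i \ln r_i = e \ln n$, tight precisely on a geometric sequence with $r_i \equiv e$ of length $\ell \approx \ln n$. In that geometric case the residual term $S_\ell/b_\ell = \sum_{k=0}^{\ell-1} e^{-k}$ is bounded by the constant $e/(e-1)$, so the displayed expression is $e - o_n(1)$, matching the claimed lower bound.

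The main technical hurdle is establishing the bound $e - o_n(1)$ \emph{uniformly} over all bid sequences, not merely the geometric one. The delicate regime is when the sequence uses many bids with ratios $r_i$ close to $1$, which can make $S_\ell / b_\ell$ grow with $\ln n$; but in that regime the inequality $r \ge e \ln r$ has large slack, so $\sum_i r_i$ strictly exceeds $e \ln n$ by enough to absorb the residual. I would finish either by a Lagrangian analysis of $\min \sum_i r_i - S_\ell/b_\ell$ subject to $\prod_i r_i = n$, or by splitting on whether $\ell \le C \ln n$ and combining with the AM-GM bound $\sum_i r_i \ge \ell \, n^{1/\ell}$ (minimized in $\ell$ at $\ell = \ln n$ with value $e \ln n$); either route should yield $\sum_i r_i - S_\ell/b_\ell \ge e \ln n - O(1)$. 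Plugging this into Yao's principle (and letting $n \to \infty$) gives the claimed $e$ lower bound.
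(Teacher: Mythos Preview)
The paper does not prove this theorem at all: it is quoted from \cite{CKNY08} and used as a black box (together with the LP-duality lemma that follows) to obtain the adaptivity-gap lower bound. So there is no in-paper proof to compare your proposal against.

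That said, your sketch is essentially the standard argument for this result and is sound in outline. Yao's principle applied to the ratio $\mathrm{cost}(T)/T$ is the right reduction; the log-uniform measure $dT/(T\ln n)$ is the right hard distribution; and your telescoping identity
\[
\E_{T}\!\left[\frac{\mathrm{cost}(T)}{T}\right]\;=\;\frac{1}{\ln n}\Bigl(\sum_i r_i \;-\; S_\ell/b_\ell\Bigr),
\qquad r_i=b_i/b_{i-1},
\]
is correct and is exactly the quantity one analyzes. The convexity fact $r\ge e\ln r$ giving $\sum_i r_i\ge e\ln n$ is the key step.

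Two small points to tighten. First, the problem is stated over the integer set $[n]$; your density $1/(t\ln n)$ is continuous, so you should either discretize (e.g.\ $p_T\propto 1/T$) and redo the telescoping with harmonic sums, or argue that the continuous relaxation only loses $o(1)$. Second, the ``main technical hurdle'' you flag---controlling $S_\ell/b_\ell$ uniformly---really does need an argument. Your AM--GM route works cleanly: for fixed $\ell$, $\sum_i r_i\ge \ell\,n^{1/\ell}$, while $S_\ell/b_\ell\le \ell$ trivially; minimizing $\ell\,n^{1/\ell}-\ell$ over real $\ell>0$ gives value $(e-o(1))\ln n$ at $\ell\approx \ln n$, but you must also check the regime $\ell\gg \ln n$ separately (there $\ell\,n^{1/\ell}\approx \ell+\ln n$, so the difference is $\approx \ln n$, which is too small). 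The fix is to use the sharper bound $S_\ell/b_\ell\le \sum_{k\ge 0}\rho^{-k}$ where $\rho=\min_i r_i$, and combine with the slack in $r\ge e\ln r$ when $\rho$ is close to $1$; this is a one-variable optimization and yields the uniform $e\ln n - O(1)$ bound you want.
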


Without loss of generality, any bid sequence must be increasing; let $\Gamma$ denote the set of increasing sequences on $[n]$. For any $I\in \Gamma$ and $T\in [n]$, let $C(I,T)$ denote the cost of sequence $I$ under threshold $T$. In terms of this notation, Theorem~\ref{th:online-bidding} is equivalent to:
\begin{equation}\label{eq:bidding-lb}
\min_{\pi \,:\, \mbox{distribution}(\Gamma)} \quad \max_{T\in [n]}\quad \frac{\E_{I\gets \pi}[C(I,T)]}{T} \quad \ge \quad e.
\end{equation}

We now define the stochastic covering knapsack instance. The target $k:=2^{n+1}$. There is one random item $r$ of zero cost having reward $k-2^i$ with probability $p_i$ (for all $i\in[n]$). There are $n$ deterministic items $\{u_i\}_{i=1}^n$ where $u_i$ has cost $i$
and reward $2^i$. We will show that there exist probabilities $p_i$s so that the adaptivity gap is $e$. 

It is clear that an optimal policy (adaptive or non-adaptive) will first choose item $r$, since it has zero cost. Moreover, an optimal adaptive policy will next choose item $u_i$ exactly when $r$ is observed to have reward $k-2^i$. Hence the optimal adaptive cost is $\sum_{i=1}^n i\cdot p_i$.

Any non-adaptive policy is given by a sequence $\tau$ of the deterministic items; recall that item $r$ is always chosen first. Moreover, due to the exponentially increasing rewards, we can assume  that $\tau$ is an increasing sub-sequence of $\{u_i: 1\le i\le n\}$. So there is a one-to-one correspondence between non-adaptive policies and $\Gamma$ (in the online bidding instance). Note that the cost of non-adaptive solution $I\in \Gamma$ is exactly $\sum_{T=1}^n p_T\cdot C(I,T)$; if the random item $r$ has size $k-2^T$ then the policy will keep choosing items in $I$ until it reaches an index at least $T$. Thus, the maximum adaptivity gap achieved by such an instance is:
\begin{equation}\label{eq:skc-ad-gap}
\max_{p\,:\,\mbox{distribution}([n])}\quad \min_{I\in \Gamma} \quad \frac{\E_{T\gets p}[C(I,T)]}{\E_{T\gets p}[T]}.
\end{equation}
The next lemma relates the quantities in~\eqref{eq:bidding-lb} and~\eqref{eq:skc-ad-gap}. Below, for any set $S$, $\cD(S)$ denotes the collection of probability distributions on $S$.
\begin{lemma} For any non-negative matrix $C(\Gamma,[n])$, we have:
$$\max_{p\in\cD([n])}\,\, \min_{I\in \Gamma} \,\, \frac{\E_{T\gets p}[C(I,T)]}{\E_{T\gets p}[T]} 
\quad =\quad \min_{\pi \in\cD(\Gamma)} \,\, \max_{T\in [n]}\,\, \frac{\E_{I\gets \pi}[C(I,T)]}{T}.$$
\end{lemma}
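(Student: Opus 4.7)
The plan is to transform the left-hand side into a bilinear min-max form and then invoke von Neumann's minimax theorem. The main obstacle is that the left side is a ratio of expectations $\E_p[C(I,T)]/\E_p[T]$ while the right side is an expectation of ratios $\E_\pi[C(I,T)]/T$, so the two sides are not directly comparable; the whole argument hinges on reconciling this mismatch by a well-chosen change of variables.

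First I would introduce the reweighting $p \leftrightarrow q$ on $\cD([n])$ defined by $q_T := p_T\,T / \E_p[T]$, with inverse $p_T \propto q_T/T$. Since every $T\in[n]$ satisfies $T\ge 1$, both normalizers $\E_p[T]$ and $\sum_{T'} q_{T'}/T'$ are strictly positive, so this is a genuine bijection between probability distributions on $[n]$. A direct computation then shows that for every $I\in\Gamma$,
$$\E_{T\sim q}\!\left[\frac{C(I,T)}{T}\right] \;=\; \sum_T \frac{p_T\,T}{\E_p[T]}\cdot\frac{C(I,T)}{T} \;=\; \frac{\E_p[C(I,T)]}{\E_p[T]}.$$
Taking $\min_{I\in\Gamma}$ on both sides and then maximizing over $q$ (equivalently over $p$), the left-hand side of the lemma is rewritten as
$$\max_{q\in\cD([n])}\;\min_{I\in\Gamma}\;\E_{T\sim q}\!\left[\frac{C(I,T)}{T}\right].$$

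Finally, the set $\Gamma$ of increasing sequences on $[n]$ and the set $[n]$ are both finite, so the bilinear payoff $M(I,T):=C(I,T)/T$ defines a finite two-player zero-sum game, and von Neumann's minimax theorem yields
$$\max_{q\in\cD([n])}\;\min_{I\in\Gamma}\;\sum_T q_T\,M(I,T) \;=\; \min_{\pi\in\cD(\Gamma)}\;\max_{T\in[n]}\;\sum_I \pi_I\,M(I,T).$$
The right-hand side is precisely $\min_\pi \max_T \E_{I\sim\pi}[C(I,T)]/T$, matching the stated right-hand side of the lemma. Apart from verifying that the reweighting is a bijection of the simplex with the claimed identity, every remaining step is a standard application of the minimax theorem on finite strategy sets.
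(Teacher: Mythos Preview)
Your proof is correct, but it takes a different route from the paper's. The paper writes the right-hand side as an explicit linear program (minimize $\beta$ subject to $T\beta \ge \sum_I C(I,T)\pi_I$ and $\sum_I \pi_I \ge 1$), takes its LP dual, and then separately argues via the \emph{homogeneity} of $f(p)=\min_I \sum_T p_T C(I,T)$ and $g(p)=\sum_T p_T T$ that the left-hand ratio $\max_{p\in\cD([n])} f(p)/g(p)$ equals $\max_{p\ge 0:\,g(p)\le 1} f(p)$, which it identifies with the dual LP.

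Your approach collapses both steps into one by the reweighting $q_T \propto p_T T$, which directly converts the ratio of expectations into a single expectation of the payoff $M(I,T)=C(I,T)/T$, and then invokes von Neumann's minimax on this finite matrix game. This is more streamlined: the reweighting is doing exactly the same work as the paper's homogeneity-plus-normalization step, but packaged as a bijection of the simplex rather than as a relaxation to the positive orthant with a constraint. Conversely, the paper's explicit LP formulation makes the underlying duality structure visible, which may be preferable if one wants to read off optimal primal/dual solutions. Both arguments rely on the same finiteness of $\Gamma$ and $[n]$ (for minimax/LP duality) and on $T\ge 1$ (so that the normalization is nondegenerate).
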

\begin{proof}
The second expression equals the LP:
$$\begin{array}{lll}
\min & \beta&\\
\mbox{s.t.}& \,\, T\cdot \beta  - \sum_{I\in \Gamma} C(I,T) \cdot \pi(I) \ge 0,& \quad \forall T\in[n],\\
&\sum_{I\in \Gamma} \pi(I) \ge 1,&\\
& \beta, \mathbf{\pi} \ge 0.&
\end{array}
$$
Taking the dual, we obtain:
$$\begin{array}{lll}
\max & \alpha&\\
\mbox{s.t.}& \,\, \alpha - \sum_{T\in[n]} C(I,T) \cdot \sigma(T) \le 0,& \quad \forall I\in\Gamma,\\
&\sum_{T\in [n]} T\cdot \sigma(I) \le 1,&\\
& \alpha, \mathbf{\sigma} \ge 0.&
\end{array}
$$
Define functions $g,f:[n]\rightarrow \mathbb{R}_+$ where
$$f(p) := \min_{I\in \Gamma} \sum_{T=1}^n p_T\cdot C(I,T)\quad \mbox{and}\quad g(p):=\sum_{T=1}^n p_T\cdot T.$$
Note that when $p$ corresponds to a probability distribution, $f(p)= \min_{I\in \Gamma}\E_{T\gets p}[C(I,T)]$ and $g(p)=\E_{T\gets p}[T]$. So the first expression in the lemma is just $\max_{p\in \cD([n])}\, f(p)/g(p)$. The key observation is that $f$ and $g$ are homogeneous, i.e. $f(a\cdot p)=a\cdot f(p)$ and $g(a\cdot p)=a\cdot g(p)$ for any scalar $a\in \mathbb{R}_+$ and vector $p\in\mathbb{R}_+^n$. This implies that the first expression equals:
$$\max_{p\in \cD([n])}\,\, \frac{f(p)}{g(p)} \quad =\quad \max_{p\in \mathbb{R}_+^n}\,\, \frac{f(p)}{g(p)} \quad =\quad \max_{p\in \mathbb{R}_+^n\,:\,g(p)\le 1}\,\, f(p).$$
It is easy to check that this equals the dual LP above, which proves the lemma.
\end{proof}

Thus the above instance of stochastic knapsack cover has adaptivity gap at least $e-o(1)$.

It can also be shown that every instance of stochastic knapsack cover with a single stochastic item has adaptivity gap  at most $e$: this uses a relation to the incremental $k$-MST problem~\cite{LNRW10}.


\section{Conclusion}
The main open question is to obtain a constant-factor approximation algorithm for stochastic $k$-TSP. Another interesting question is the adaptivity gap, even in the special case of stochastic knapsack cover: the currently known lower bound is $e$ and upper bound is $O(\log^2k)$.

\section*{Acknowledgment} We thank Itai Ashlagi for initial discussions that lead to this problem definition.

\bibliographystyle{plain}
\bibliography{stoch-kc}

\end{document}